\theoremstyle{plain}
\newtheorem{theorem}{Theorem}[section]
\newtheorem{lemma}{Lemma} 
\theoremstyle{remark}
\newcommand{\lmb}{\lambda}
\newcommand{\Lmb}{\Lambda}
\newcommand{\e}{\varepsilon}
\newcommand{\g}{\gamma}
\renewcommand{\d}{\delta}
\newcommand{\p}{\partial}
\newcommand{\D}{\Delta}
 \newcommand{\E}{\mbox{\rm e}}
\newcommand{\vp}{\varphi}
\newcommand{\vk}{\varkappa}
\newcommand{\Odr}{\mathcal{O}}
\newcommand{\iu}{i}
\newcommand{\Op}{\mathcal{H}}
\newcommand{\Res}{\mathcal{R}}
\newcommand{\I}{\mathrm{I}}
\DeclareMathOperator{\spec}{\sigma}
\DeclareMathOperator{\dist}{dist}
\DeclareMathOperator{\RE}{Re}
\begin{document}

\title{On the spectral stability of kinks in 2D Klein-Gordon
model with parity-time-symmetric perturbation}

\author{Denis I. Borisov\,$^{a,b,c}$, Sergey V. Dmitriev$^{d,e}$}

\date{\empty}

\maketitle

\begin{center}

\small{
\begin{enumerate}[a)]

\item Institute of Mathematics CC USC RAS, Chernyshevsky Str.~112, Ufa, Russia, 450008
    
\item Bashkir State Pedagogical University, October Revolution Str.~3a, Ufa, Russia, 450000
    
\item University of Hradec Kr\'alov\'e, Rokitansk\'eho 62, 500 03, Hradec Kr\'alov\'e, Czech Republic
    
\item Institute for Metals Superplasticity Problems of RAS, Khalturin Str.,~39, Ufa, Russia, 450001
    
\item National Research Tomsk State University, 36 Lenin Street, Tomsk, Russia, 634050
\end{enumerate}}

\end{center}

\begin{abstract}
In a series of recent works by Demirkaya {\it et al.} stability analysis for the static kink solutions to the 1D continuous and discrete Klein-Gordon equations with a $\mathcal{PT}$-symmetric perturbation has been analysed. We consider the linear stability problem for the static kink in 2D Klein-Gordon field taking into account spatially localized $\mathcal{PT}$-symmetric perturbation. The perturbation is in the form of viscous friction, which does not affect the static solutions to the unperturbed Klein-Gordon equation. Small dynamic perturbation around the static kink solution is considered to formulate the linear stability problem. The effect of the small perturbation on the solutions to the corresponding eigenvalue problem is analysed. The main result is presented in the form of a theorem describing the behavior of the eigenvalues corresponding to the extended and localised eigenmodes as the functions of the perturbation parameter.
\end{abstract}

\section{Introduction} \label{Sec1}

The observation that non-Hermitian operators can possess a purely real spectrum under parity-time ($\mathcal{PT}$) symmetry condition \cite{Bender1,Bender2,Bender3} turned out to be very fruitful for many physical applications such as quantum mechanics \cite{QM1,QM2} and optics \cite{Optics1,Optics2,Optics3}, with some impact even on the electrical \cite{Electricity1,Electricity2} and mechanical \cite{Mechanics,Mechanics1} systems.

Very recently Demirkaya {\it et al.} have proposed a $\mathcal{PT}$-symmetric perturbation for the 1D Klein-Gordon field \cite{KG1}. The perturbation is such that the properties of static solutions are not affected. In that work the authors have addressed the problem of static kink stability. It has been found that the static kink is stable (unstable) if placed in the lossy (gain) region \cite{KG1,DKKSS}. Existence and stability of the dynamical excitation such as the sine-Gordon breather have been investigated by Lu {\it et al.} \cite{Lu}. According to the results of their study, the (unstable) breather exists if located exactly at the center between the gain and loss sides of the perturbation, while it cannot survive being placed in the gain or loss region. Stability analysis has been performed for the periodic standing wave solutions for the coupled nonlinear Schr\"odinger equations \cite{DemirkayaPLA}. Parameters that ensure the stability of the studied solutions have been calculated.

Kevrekidis has developed a collective variable method to describe the dynamics of kinks in 1D Klein-Gordon field in the presence of $\mathcal{PT}$-symmetric perturbation \cite{K1}. In the later works this method was applied to the sine-Gordon \cite{Danial1} and $\phi^4$ \cite{Danial2,Danial3} models.

In the present study we extend the stability analysis of the static kinks by considering two-dimensional Klein-Gordon model with $\mathcal{PT}$-symmetric perturbation. In Sec. \ref{Sec2} we describe the 2D nonlinear Klein-Gordon equation with the $\mathcal{PT}$-symmetric perturbation and formulate the eigenvalue problem for the perturbation around static kink solution. Then, the main result describing the behaviour of the eigenvalues as the functions of small perturbation parameter is formulated in Theorem 1. The proof of the theorem is presented in Sec. \ref{Sec3}. A brief conclusion is given in Sec. \ref{Sec4}.

\section{Motivation, problem, and main results} \label{Sec2}

We begin with the perturbed two-dimensional Klein-Gordon equation
\begin{equation}\label{a1}
u_{tt} -\D u + \e \gamma(x,y)u_t + f(u) = 0,\quad t>0,\quad (x,y)\in\mathds{R}^2.
\end{equation}
Here $\D$ stands for the Laplace operator $\D=\frac{\p^2}{\p x^2}+\frac{\p^2}{\p y^2}$, $(x,y)$ are the Cartesian coordinates in $\mathds{R}^2$, $\g(x,y)u_t$ is the perturbation term describing viscous gain and loss in the system. We assume that  function $\g$  is continuous in $\mathds{R}^2$, decays exponentially as $x^2+y^2\to\infty$, namely,
\begin{equation*}
|\g(x,y)|\leqslant Ce^{-a\sqrt{x^2+y^2}},\quad (x,y)\in\mathds{R}^2,
\end{equation*}
where $C$, $a$ are fixed positive constants, and satisfies one of the following oddness properties:
\begin{equation*}
\g(-x,-y)=-\g(x,y)\quad\text{or}\quad \g(-x,y)=-\g(x,y)\quad\text{or}\quad \g(x,-y)=-\g(x,y).
\end{equation*}
Function $f$ is supposed to be continuously differentiable and such that the equation
\begin{equation*}
-\phi_{xx}+f(\phi)=0
\end{equation*}
has a static kink solution $\phi=\phi(x)$.

We make the following assumptions for functions $f$ and $\phi$. Consider the operator
\begin{equation*}
\Op_0:=-\frac{d^2\hphantom{x}}{dx^2}+f'(\phi(x))\quad \text{in}\quad L_2(\mathds{R}),
\end{equation*}
and suppose that its essential spectrum is a subset of the half-line $[\Lmb_e,+\infty)$, and $\Lmb_e$ is the bottom of this essential spectrum. A typical behaviour of the function $f'(\phi)$ at infinity is
\begin{equation}\label{2.4}
\lim\limits_{x\to\pm\infty} f'(\phi(x))= \Lmb_e^\pm,
\end{equation}
where $\Lmb_e^\pm$ are some constants. Then the essential spectrum of $\Op_0$ is exactly $[\Lmb_e,+\infty)$ with $\Lmb_e:=\min\{\Lmb_e^-,\Lmb_e^+\}$. Particularly, for the sine-Gordon field one has
\begin{equation*}
f(u) = \sin(u), \quad \phi(x,y) = 4 \arctan(e^x),
\end{equation*}
and (\ref{2.4}) is satisfied with $\Lmb_e^\pm=1$.
For the $\phi^4$ model
\begin{equation*}
f(u) = 2(u^3 - u),\quad \phi(x,y) = \tanh(x),
\end{equation*}
and (\ref{2.4}) holds true with $\Lmb_e^\pm=4$.

In this paper we study the stability of kink $\phi$. Namely, we consider a small dynamical perturbation of the static kink
\begin{equation*}
u(x,y,t) = \phi(x,y) + \psi(x,y,t).
\end{equation*}
We substitute this sum into (\ref{a1}) and linearize the obtained equation w.r.t. $\phi$. The obtained linear equation reads as
\begin{equation*}
\vp_{tt}-\D\vp +\e\g\vp_t+f'(\phi)\vp=0,\quad t>0,\quad (x,y)\in \mathds{R}^2.
\end{equation*}
Then we separate spatial and time variables assuming that
\begin{equation*}
\vp(x,y,t)=e^{\lmb t} \psi(x,y).
\end{equation*}
It leads us to the following eigenvalue problem
\begin{equation}\label{2.5}
(-\D+f'(\phi)+\e\g\lmb + \lmb^2)\psi=0,\quad (x,y)\in\mathds{R}^2.
\end{equation}
Exactly this equation is the main object of our study in the present paper. We are interested in non-trivial solutions to this equation and the associated values of spectral parameter $\lmb$. Namely, we are interested in the eigenvalues and the resonances of the above equation. We define eigenvalues as $\lmb$, for which there exists a non-trivial solution to (\ref{2.5}) belonging to Sobolev space $W_2^2(\mathds{R}^2)$. We recall that space $W_2^2(\mathds{R}^2)$ is a subset of functions in $L_2(\mathds{R}^2)$ whose first and second generalized derivatives are also elements of $L_2(\mathds{R}^2)$.

A resonance is a value of $\lmb$, for which there exists a non-trivial solution to (\ref{2.5})  belonging to Sobolev space $W_{2,loc}^2(\mathds{R}^2)$ and behaving at infinity as
\begin{equation}\label{2.15}
\psi(x,y)=C_\pm(x)e^{-\sqrt{\Lmb_e+\lmb^2}|y|}+\ldots,\quad y\to\pm\infty,
\end{equation}
where $C_\pm(x)$ are some functions. In all our considerations below functions $C$ belong to $W_2^2(\mathds{R})$. We also recall that $W_{2,loc}^2(\mathds{R}^2)$ is a set of functions belonging to $W_2^2(Q)$ for each bounded domain $Q\subset \mathds{R}^2$.

We shall call a resonance simple if there exists just one associated non-trivial solution to (\ref{2.5}), (\ref{2.15}).

Our first lemma describes the essential spectrum of the operator $\Op:=-\D+f'(\phi(x))$ in $L_2(\mathds{R}^2)$.

\begin{lemma}\label{lm3.1}
The essential spectrum of $\Op$ is the semi-axis $[\Lmb_0,+\infty)$,
where $\Lmb_0$ is the bottom of the spectrum of $\Op_0$.
\end{lemma}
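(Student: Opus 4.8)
The plan is to exploit the product structure of the problem. Since the potential $f'(\phi(x))$ depends on $x$ alone, the operator separates as $\Op = \Op_0\otimes I + I\otimes\bigl(-\frac{d^2}{dy^2}\bigr)$ on $L_2(\mathds{R}^2)=L_2(\mathds{R}_x)\otimes L_2(\mathds{R}_y)$, where the second factor, the free one-dimensional Laplacian $-\frac{d^2}{dy^2}$ on $L_2(\mathds{R}_y)$, is a nonnegative self-adjoint operator with purely absolutely continuous spectrum equal to $[0,+\infty)$. I would make this rigorous either by taking the Fourier transform in $y$, which turns $\Op$ into the direct integral $\int^{\oplus}(\Op_0+k^2)\,dk$ over $k\in\mathds{R}$ and immediately displays the spectrum as $\overline{\bigcup_k(\spec(\Op_0)+k^2)}$, or by working directly with singular Weyl sequences. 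I describe the latter, more elementary route.

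First I would establish the upper bound. Since $-\frac{d^2}{dy^2}\geqslant 0$ and $\Op_0\geqslant\Lmb_0$ by the definition of $\Lmb_0$ as the bottom of $\spec(\Op_0)$, the quadratic form of $\Op$ is bounded below by $\Lmb_0$, whence $\spec(\Op)\subseteq[\Lmb_0,+\infty)$ and a fortiori $\essspec(\Op)\subseteq[\Lmb_0,+\infty)$.

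The core of the argument is the reverse inclusion $[\Lmb_0,+\infty)\subseteq\essspec(\Op)$. Given any $\Lmb\geqslant\Lmb_0$ I would write $\Lmb=\mu+k^2$ with $\mu\in\spec(\Op_0)$ and $k\geqslant 0$ (one may even always take $\mu=\Lmb_0$, since $\spec(\Op_0)$ contains its bottom and $k^2$ sweeps out $[0,+\infty)$). I then build a singular Weyl sequence for $\Op$ at $\Lmb$ as a tensor product $\psi_n(x,y)=\chi_n(x)\,v_n(y)$. The $y$-factor is the standard spreading plane wave $v_n(y)=n^{-1/2}\eta(y/n)e^{\iu k y}$ with a fixed cutoff $\eta$, which satisfies $\|v_n\|\to\mathrm{const}$, $v_n\rightharpoonup 0$, and $\bigl\|\bigl(-\frac{d^2}{dy^2}-k^2\bigr)v_n\bigr\|\to 0$. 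For the $x$-factor there are two cases: if $\mu$ is an eigenvalue of $\Op_0$ (the typical situation for the bottom $\Lmb_0$, e.g.\ the translational zero mode of the kink) I take $\chi_n=\chi$ equal to the corresponding normalized eigenfunction; if $\mu$ lies in $\essspec(\Op_0)$ I take $\chi_n$ to be a singular Weyl sequence for $\Op_0$ at $\mu$. In both cases the identity $(\Op-\Lmb)(\chi_n v_n)=\bigl((\Op_0-\mu)\chi_n\bigr)v_n+\chi_n\bigl(\bigl(-\frac{d^2}{dy^2}-k^2\bigr)v_n\bigr)$ gives $\|(\Op-\Lmb)\psi_n\|\to 0$, while $\psi_n\rightharpoonup 0$ and $\|\psi_n\|$ stays bounded away from zero. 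Hence $\Lmb\in\essspec(\Op)$, and combining the two inclusions yields $\essspec(\Op)=[\Lmb_0,+\infty)$.

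The main obstacle is the construction and verification of these singular Weyl sequences: ensuring simultaneously the weak convergence to zero (which is precisely what promotes a point of $\spec(\Op)$ to a point of the \emph{essential} spectrum and rules out any discrete spectrum arising from the $y$-direction) and the norm convergence $\|(\Op-\Lmb)\psi_n\|\to 0$. The decisive structural fact that makes the entire half-line essential, rather than leaving an isolated eigenvalue at $\Lmb_0$, is that $-\frac{d^2}{dy^2}$ possesses no eigenvalues, so the $y$-factor $v_n$ can always be chosen to escape to infinity and converge weakly to zero regardless of whether $\mu$ is a genuine eigenvalue of $\Op_0$.
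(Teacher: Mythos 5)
Your proposal is correct and follows essentially the same route as the paper: the lower bound via the quadratic form and the minimax principle, and the reverse inclusion via tensor-product singular Weyl sequences of the form $\chi_m(x)e^{\iu k y}\xi(y/m)$, splitting into the cases where $\Lmb_0$ is a discrete eigenvalue of $\Op_0$ or the bottom of its essential spectrum. The only cosmetic difference is your remark that one may always take $\mu=\Lmb_0$ and let $k^2$ sweep the half-line, which slightly streamlines the case analysis but does not change the argument.
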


To formulate the main results, we need additional notations. Let $\Lmb_*<\Lmb_e$ be a discrete eigenvalue of operator $\Op_0$. Since this operator is one-dimensional,   eigenvalue $\Lmb_*$ is simple and by $\psi_*=\psi_*(x)$ we denote the associated real eigenfunction normalized in $L_2(\mathds{R})$.
We define $\vk_*:=\sqrt{\Lmb_*}$ and note that $\vk_*\geqslant 0$ if $\Lmb_*>0$ and it is pure imaginary with $\mathrm{Im}\,\vk_*>0$ if $\Lmb_*<0$. We let
\begin{align*}
K_1:=\frac{1}{2} \int\limits_{\mathds{R}^2} \g(x,y)\psi_j^2(x) dxdy,
\quad 
K_2:=\frac{1}{2} \int\limits_{\mathds{R}^2} \g(x,y)\psi_j(x) U_*(x,y) dxdy,
\end{align*}
where $U_*$ is introduced as the solution to the equation
\begin{equation}\label{2.16}
(-\D+f'(\phi)-\Lmb_*)U_*=\g\psi_*-\psi_* \int\limits_{\mathds{R}^2} \psi_*^2\g dxdy,
\end{equation}
behaving at infinity as
\begin{equation}\label{2.17}
U_*(x,y)=\frac{\psi_*(x)}{2} \int\limits_{\mathds{R}^2} |y-s_2|\psi_*^2(s_1) \g(s_1,s_2) ds_1 ds_2+o(1),\quad y\to\pm\infty.
\end{equation}
In what follows we shall show that this problem is uniquely solvable.

By $\spec(\Op_0)$ we denote the spectrum of operator $\Op_0$.

Our main results reads as follows.

\begin{theorem}\label{th2.2}
1. As $\e\to+0$, each of the eigenvalues and resonances $\lmb_\e$ of equation (\ref{2.5}) either tends to infinity or the distance from $-\lmb_\e^2$ 
to the set
$\spec(\Op_0)\setminus\{0\}$
tends to zero.

2. Suppose that $\Lmb_*>0$, $K_2>0$. Then for all sufficiently small $\e$  equation (\ref{2.5}) has exactly two eigenvalues $\lmb^\pm_\e$ converging to $\pm\iu\vk_*$ as $\e\to+0$. These eigenvalues are simple and have the asymptotics
\begin{equation}\label{3.24}
\begin{aligned}
&\lmb^\pm_\e=\pm\iu\vk_* \pm   \frac{\iu\vk_* K_1^2}{2} \e^2 + \Lmb_* K_1 K_2
\e^3+  \Odr(\e^4), && \text{if}\quad K_1\not=0,
\\
&\lmb^\pm_\e=\pm\iu\vk_* \mp  \iu\vk_* \Lmb_*K_2^2\e^4  +  \Odr(\e^5), && \text{if}\quad K_1=0.
\end{aligned}
\end{equation}

3. Suppose that $\Lmb_*>0$, $K_2<0$. Then for all sufficiently small $\e$  equation (\ref{2.5}) has exactly two resonances converging to $\pm\iu\vk_*$ as $\e\to+0$. These resonances are simple and have   asymptotics (\ref{3.24}).

4. Suppose that $\Lmb_*<0$, $K_1\not=0$. Then for all sufficiently small $\e$  equation (\ref{2.5}) has exactly one eigenvalue $\lmb_\e^+$ and one resonance $\lmb_\e^-$ converging to $\pm|\vk_*|$ as $\e\to+0$. These eigenvalue and resonance are simple and have the asymptotics
\begin{align*}
&\lmb^\pm_\e=\pm|\vk_*| \pm   \frac{ K_1^2|\vk_*|}{2} \e^2 + K_1 K_2 \Lmb_*
\e^3+  \Odr(\e^4),\quad \text{if}\quad K_1>0,
\\
&\lmb^\pm_\e=\mp|\vk_*| \mp   \frac{ K_1^2|\vk_*|}{2} \e^2 + K_1 K_2 \Lmb_*
\e^3+  \Odr(\e^4),\quad \text{if}\quad K_1<0.
\end{align*}

5.  Suppose that $\Lmb_*<0$, $K_1=0$, $K_2>0$. Then for all sufficiently small $\e$  equation (\ref{2.5}) has exactly two resonances  converging to $\pm|\vk_*|$ as $\e\to+0$. These resonances are simple and have the asymptotics
\begin{equation}\label{2.19}
\lmb^\pm_\e=\pm|\vk_*| \pm \frac{|\vk_*| \Lmb_*K_2^2}{2} \e^4 +  \Odr(\e^5).
\end{equation}

6.  Suppose that $\Lmb_*<0$, $K_1=0$, $K_2<0$. Then for all sufficiently small $\e$  equation (\ref{2.5}) has exactly two eigenvalues  converging to $\Lmb_*$ as $\e\to+0$. These eigenvalues are simple and have asymptotics (\ref{2.19}).
\end{theorem}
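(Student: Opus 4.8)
The plan is to recast (\ref{2.5}) as the perturbed spectral problem $(\Op+\lmb^2)\psi=-\e\lmb\g\psi$ for $\Op=-\D+f'(\phi)$ and, using that $\g$ decays exponentially so that multiplication by $\g$ is $\Op$-compact, to argue that neither an eigenvalue nor a resonance can appear at a point where the (meromorphically continued) resolvent $(\Op+\lmb^2)^{-1}$ remains uniformly bounded in suitable weighted spaces. By Lemma \ref{lm3.1} and the separation $\Op=\Op_0-\p_y^2$, the thresholds of $\Op$ coincide with the points of $\spec(\Op_0)$, and away from them the resolvent is bounded; hence any finite limit point of a family $\lmb_\e$ must satisfy $-\lmb_\e^2\to\spec(\Op_0)$. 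The value $0$ is excluded because at $\lmb=0$ the perturbation $\e\g\lmb$ vanishes identically, so the translational zero mode $\phi'$ persists unperturbed and no bifurcation occurs there; this is Part 1.

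For Parts 2--6 I would localize at a simple discrete eigenvalue $\Lmb_*<\Lmb_e$ of $\Op_0$, which is the bottom of a band of $\Op$ and therefore a threshold. Writing $\psi(x,y)=c(y)\psi_*(x)+\psi^\perp(x,y)$ with $\psi^\perp$ orthogonal to $\psi_*$ in $x$ for each fixed $y$ and projecting (\ref{2.5}) onto $\psi_*$, the transverse part $\psi^\perp$ is governed by the component of $\Op-\Lmb_*$ acting in the orthogonal complement of $\psi_*$, which is boundedly invertible near $\Lmb_*$ and yields $\psi^\perp=\Odr(\e)$; the longitudinal amplitude solves a one-dimensional equation $-c''+k^2c=-\e\lmb g(y)c+\ldots$ with $k:=\sqrt{\lmb^2+\Lmb_*}$ and $g(y):=\int_{\mathds{R}}\g(x,y)\psi_*^2(x)\,dx$. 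The crucial feature is the threshold singularity $\tfrac{1}{2k}$ of the one-dimensional Green's function of $-\p_y^2+k^2$; a Lyapunov--Schmidt elimination of $\psi^\perp$ then yields a scalar equation whose leading balance is $k=-\e\lmb K_1+\ldots$, the next-order correction being carried precisely by $U_*$, whose defining problem (\ref{2.16})--(\ref{2.17}) I would show uniquely solvable by the same threshold-resolvent construction (the kernel $\tfrac12|y-s_2|$ in (\ref{2.17}) being exactly the $k\to0$ limit of that Green's function).

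Solving this scalar equation asymptotically produces the stated expansions. Setting $\lmb=\pm\iu\vk_*+\d$ (for $\Lmb_*>0$) gives $k\approx\sqrt{\pm2\iu\vk_*\,\d}$, so the balance $k\approx-\e\lmb K_1$ forces $\d\sim\e^2$ and the term $\tfrac12\iu\vk_* K_1^2\e^2$ when $K_1\neq0$, while the $U_*$/$K_2$ contribution supplies the real $\e^3$ term in (\ref{3.24}); when $K_1=0$ the balance becomes $k\sim\e^2K_2$, whence $\d\sim\e^4$ and the quartic asymptotics (\ref{2.19}). The eigenvalue/resonance dichotomy is decided by $\RE k$: a decaying, hence $W_2^2(\mathds{R}^2)$, solution requires $\RE k>0$, whereas $\RE k<0$ produces the growing behaviour (\ref{2.15}) of a resonance. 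Since the leading $k$ is purely imaginary for $\Lmb_*>0$ and merely changes sign under $\lmb\mapsto-\lmb$ for $\Lmb_*<0$, the sign of $\RE k$ is fixed by the first real correction, and tracing it through the reduction shows that it is governed by the signs of $\Lmb_*$, $K_1$, $K_2$ exactly as in the six cases: eigenvalues in Parts 2 and 6, resonances in Parts 3 and 5, and one of each in Part 4.

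The main obstacle is the rigorous control of the threshold: one must construct the analytic continuation of $(\Op+\lmb^2)^{-1}$ across the band edge $\Lmb_*$, isolating the $1/k$ pole from its holomorphic remainder, and justify the Lyapunov--Schmidt reduction uniformly for small $\e$ and $\lmb$ near $\pm\iu\vk_*$ (resp. $\pm|\vk_*|$). Exact multiplicity---that there are precisely the two claimed branches and no others---would follow from a Rouch\'e-type count applied to the scalar equation. Finally, proving the unique solvability of (\ref{2.16})--(\ref{2.17}) with the prescribed linear-in-$|y|$ growth and verifying that the second-order coefficient is exactly $K_2$ is the computation that converts the formal balances above into the remainder estimates $\Odr(\e^4)$ and $\Odr(\e^5)$.
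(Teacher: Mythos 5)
Your plan is essentially the paper's proof: analytic continuation of $(\Op+\lmb^2)^{-1}$ across the band edge in exponentially weighted spaces, extraction of the $\tfrac{1}{2k}$ pole with $k=\sqrt{\lmb^2+\Lmb_*}$, reduction to a scalar equation in $k$ solved by Rouch\'e's theorem, identification of $U_*$ and $K_2$ from the $k\to0$ limit of the transverse Green's function, and the eigenvalue/resonance dichotomy read off from the sign of $\RE k$. The only cosmetic difference is that you organize the reduction as a Lyapunov--Schmidt projection onto $\psi_*$, whereas the paper runs a Birman--Schwinger argument on $f_\e=-\e\lmb\g\psi$ (and likewise your exclusion of the point $0$ via the vanishing of the perturbation $\e\g\lmb$ at $\lmb=0$ is exactly the paper's observation that for $\Lmb_*=0$ the scalar equation degenerates to $k=0$); both yield the same scalar bifurcation equation and the same asymptotics.
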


Let us discuss briefly this theorem. The spectrum of operator $\Op_0$ is a fixed set having the essential part $[\Lmb_e,+\infty)$ and finitely many discrete eigenvalues below $\Lmb_e$. And Statement~1 of Theorem~1 says that there are just two possible types of the behavior for the eigenvalues and the resonances of $\Op_\e$. The first option is just escaping to the infinity as $\e\to+0$. If this is not the case and an eigenvalue/resonance $\lmb_\e$ remains in a bounded domain of the complex plane, then $-\lmb_\e^2$ is located in a small neighbourhood of the set $\spec(\Op_0)\setminus\{0\}$. And this neighbourhood becomes smaller as $\e\to+0$. Point $0$ is exceptional in the sense that the eigenvalues/resonances do no tend to this point.

The essential spectrum of operator $\Op$ is wider than that of operator $\Op_0$, see Lemma~\ref{lm3.1}. Each discrete eigenvalue of $\Op_0$ is an internal point in the essential spectrum of $\Op$. Although being embedded into the essential spectrum, these points are exceptional. Namely, each such non-zero point generate either two eigenvalues or two resonances or a pair of an eigenvalue and a resonance. By generating we mean that there are two eigenvalues/resonances $\lmb_\e^\pm$ of equation (\ref{2.5}) such that $(\lmb_\e^\pm)^2\to -\Lmb_*$ as $\e\to+0$, where $\Lmb_*$ is a discrete eigenvalue of $\Op_0$. Statements~2-6 of Theorem~1 describes how to determine whether we have eigenvalues or resonances and provide also the leading terms of their asymptotic expansions. As we see, the type of the emerging spectral point is very sensitive to the signs of $\Lmb_*$, $K_1$, $K_2$. Statements~2-6 describe main possible cases, but not all possible ones. For instance, the case $K_1=K_2=0$ is not covered by Theorem~\ref{th2.2}. But our technique can be easily extended to such uncovered case. On the other hand, it requires additional technical bulky calculations. This is the reason why we do not include these cases in the present paper.

\section{Proof of main result} \label{Sec3}

In this section we prove Lemma~\ref{lm3.1} and Theorem~\ref{th2.2}.

\begin{proof}[Proof of Lemma~\ref{lm3.1}.]
We write the quadratic form for $\Op$ and estimate it from below by employing the definition of $\Lmb_0$:
\begin{equation}\label{3.9}
\begin{aligned}
(\Op\psi,\psi&)_{L_2(\mathds{R}^2)}=\|\nabla \psi\|_{L_2(\mathds{R}^2)}^2+(f'(\phi)\psi,\psi)_{L_2(\mathds{R}^2)}
\\
=&\int\limits_{\mathds{R}} \left(\|\nabla \psi(\cdot,y)\|_{L_2(\mathds{R})}^2 + \big(f(\phi')\psi(\cdot,y),\psi(\cdot,y)\big)_{L_2(\mathds{R})} \right)dy
\\
\geqslant& \Lmb_0 \int\limits_{\mathds{R}} \|\psi(\cdot,y)\|_{L_2(\mathds{R})}^2d y=\Lmb_0\|\psi\|_{L_2(\mathds{R}^2)}^2
\end{aligned}
\end{equation}
for each $\psi\in W_2^2(\mathds{R}^2)$. By the minimax principle it yields that the spectrum of $\Op$ lies in $[\Lmb_0,+\infty)$.

Assume that $\Lmb_0$ is a discrete eigenvalue of $\Op_0$ and $\psi_0(x)$ is the associated eigenfunction normalized in $L_2(\mathds{R})$. Let $\xi=\xi(y)$ be an infinitely differentiable cut-off function equalling one as $|y|<1$ and vanishing as $|y|>2$. It is easy to make sure that up to an appropriate renormalization the sequence of functions $(x,y)\mapsto \psi_0(x)\E^{\iu k y} \xi(y/m)$ is a characteristic one for $\Op$ at the point $\Lmb_0+k^2$ for each $k>0$.  If $\Lmb_0$ is the bottom of the essential spectrum of $\Op_0$, then for each $\Lmb>\Lmb_0$ there exists a characteristic sequence $u_m(x)$ for $\Op_0$ at point $\Lmb$. Then it is easy to check that an appropriate renormalization the sequence of functions $(x,y)\mapsto u_m(x)\xi(y/m)$ is a characteristic one for $\Op$ at point $\Lmb$. It remains to apply Weyl criterion to complete the proof.
\end{proof}

The rest of the section is devoted to the proof of Theorem~\ref{th2.2}.

We let $\chi(x,y):=e^{a\sqrt{x^2+y^2}}$ and by $L_2(\mathds{R}^2,\chi dxdy)$ we denote the weighted space of functions
$$
L_2(\mathds{R}^2,\chi dxdy):=\Big\{u\in L_2(\mathds{R}^2):\, \int\limits_{\mathds{R}^2} |u|^2\chi dxdy\Big\}.
$$

Let $\Lmb_j<\Lmb_e$ be the discrete eigenvalues of $\Op_0$ and $\psi_j(x)$ be the associated eigenfunctions normalized in $L_2(\mathds{R})$. Since operator $\Op_0$ is one-dimensional, all its discrete eigenvalues are simple.

We introduce a subspace $V\subset L_2(\mathds{R}^2)$ as a set of functions $f\in L_2(\mathds{R}^2)$ satisfying the condition
\begin{equation*}
\int\limits_{\mathds{R}} f(x,y)\psi_j(x)dx=0\quad\text{for a.e.}\ y\in \mathds{R}\  \text{and for all}\ j.
\end{equation*}

Our next statement describes the behavior of the resolvent $(\Op-z)^{-1}f$ as $z$ is separated from $[\vk^2,+\infty)$.

\begin{lemma}\label{lm3.2}
Let $z\in\mathds{C}$, $\dist(z,[\vk^2,+\infty))\geqslant c_0$, where $c_0$ is a fixed positive constant. Then for each $f\in L_2(\mathds{R}^2,\chi dxdy)$ the equation
\begin{equation}\label{3.4}
(-\D+f'(\phi)-z)u=f\quad \text{in}\quad \mathds{R}^2
\end{equation}
is solvable and its solution can be represented as
\begin{equation}\label{3.5}
\begin{aligned}
u(x,y,z)=&-\sum\limits_{j} \frac{\psi_j(x)}{2k_j(z)}\int\limits_{\mathds{R}^2} \psi_j(t_1) \E^{-k_j(z)|y-t_2|}f(t_1,t_2)dt_1dt_2
\\
&+ (\Res(z)f)(x,y,z),
\end{aligned}
\end{equation}
where $k_j(z):=\sqrt{\Lmb_j-z}$, the branch of the square root is fixed by the condition $\sqrt{1}=1$,
and   operator $\Res(z): L_2(\mathds{R}^2,\chi dxdy)\mapsto W_2^2(\mathds{R}^2)$ is bounded and holomorphic w.r.t. $z$ such that $\dist\big(z,[\vk^2,+\infty)\big)\geqslant c_0$.
\end{lemma}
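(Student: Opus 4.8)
The plan is to diagonalize $\Op$ in the transverse variable $x$ by means of the spectral decomposition of the one-dimensional operator $\Op_0=-\frac{d^2}{dx^2}+f'(\phi(x))$, thereby reducing (\ref{3.4}) to a family of one-dimensional problems in $y$. Since $f'(\phi)$ depends on $x$ only, the operator $\Op=-\p_y^2+\Op_0$ commutes with every spectral projector of $\Op_0$ acting in the $x$-variable, in particular with the rank-one projectors $P_j:=\psi_j\langle\psi_j,\,\cdot\,\rangle_{L_2(\mathds{R})}$ onto the discrete eigenfunctions and with the projector onto the subspace $V$. I would therefore write $f(x,y)=\sum_j\psi_j(x)g_j(y)+f_\perp(x,y)$ with $g_j(y):=\int_{\mathds{R}}\psi_j(x)f(x,y)\di x$ and $f_\perp\in V$, solve (\ref{3.4}) separately in each discrete channel and on $V$, and add the outcomes.

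On the $j$-th channel, projecting (\ref{3.4}) against $\psi_j$ and using $\Op_0\psi_j=\Lmb_j\psi_j$ reduces it to the one-dimensional equation $(-\p_y^2+\Lmb_j-z)u_j=g_j$, i.e. $(-\p_y^2+k_j^2(z))u_j=g_j$ with $k_j^2=\Lmb_j-z$. Its decaying solution (or, past the branch point $z=\Lmb_j$, its outgoing continuation) is the convolution of $g_j$ with the explicit Green's function $\frac{1}{2k_j(z)}\E^{-k_j(z)|y-t|}$, and this is exactly what produces the singular terms in (\ref{3.5}). The weighted hypothesis $f\in L_2(\mathds{R}^2,\chi\di x\di y)$ enters here: the exponential decay of $f$, hence of each $g_j$, is what makes the convolution integrals converge and defines these terms as $z$ crosses $\Lmb_j$, where the kernel is only bounded or mildly growing in $|y-t|$. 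With the branch $\sqrt{1}=1$, $k_j(z)$ is holomorphic with $\RE k_j>0$ off the cut $[\Lmb_j,+\infty)$, so each singular term is an explicit, generally non-$L_2$, function of $z$ carrying the threshold singularity at $z=\Lmb_j$.

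On $V$ the equation reads $(\Op|_V-z)u_\perp=f_\perp$. Deleting the discrete eigenfunctions leaves the part of $\Op_0$ with spectrum $[\Lmb_e,+\infty)$, and the transverse term raises this to $\spec(\Op|_V)=[\vk^2,+\infty)$ with $\vk^2=\Lmb_e$; hence for $\dist(z,[\vk^2,+\infty))\geqslant c_0$ the resolvent $(\Op|_V-z)^{-1}$ exists, is bounded on $L_2$, and is holomorphic in $z$ by the spectral theorem. I would then define $\Res(z)f:=(\Op|_V-z)^{-1}f_\perp$; precomposing with the bounded embedding $L_2(\mathds{R}^2,\chi\di x\di y)\hookrightarrow L_2(\mathds{R}^2)$ keeps it bounded and holomorphic. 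That it maps into $W_2^2(\mathds{R}^2)$ follows from elliptic regularity, since $-\D u_\perp=f_\perp+(z-f'(\phi))u_\perp\in L_2(\mathds{R}^2)$ with $f'(\phi)$ bounded, whence $\|u_\perp\|_{W_2^2}\leqslant C\|f\|_{L_2(\mathds{R}^2,\chi\di x\di y)}$ uniformly in $z$ on the admissible region. In contrast to the channel terms, $\Res(z)$ is holomorphic also across the points $z=\Lmb_j$, because these lie strictly below $\vk^2$ and thus in the resolvent set of $\Op|_V$.

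The main obstacle I anticipate is controlling the $V$-part resolvent as a bounded, holomorphic operator from the weighted space into $W_2^2(\mathds{R}^2)$ uniformly over the whole region $\dist(z,[\vk^2,+\infty))\geqslant c_0$, together with a clean justification that the fixed weight is strong enough for the channel convolutions to converge there: the delicate interplay is between the decay rate $a$ of $\chi$ and the growth rate $|\RE k_j(z)|$ of the Green's function when $z$ runs far into the region, which one handles by exploiting the smallness of $|\RE k_j|$ on the bounded part of the region that is actually used in the perturbation analysis. Once this is settled, assembling $u$ from the channel contributions $\psi_j(x)u_j(y)$ and from $\Res(z)f$, verifying that it solves (\ref{3.4}) in $W_{2,loc}^2(\mathds{R}^2)$, and reading off representation (\ref{3.5}) together with the asserted holomorphy of $\Res(z)$ are then immediate.
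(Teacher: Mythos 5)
Your proposal follows essentially the same route as the paper: decompose $f$ into the discrete channels $\psi_j f_j$ plus the component in $V$, solve each channel explicitly by convolution with the one-dimensional Green's function $\frac{1}{2k_j(z)}\E^{-k_j(z)|y-t|}$ (using the exponential weight to make these integrals converge past the thresholds), and define $\Res(z)$ as the resolvent of $\Op$ restricted to the invariant subspace $V$, whose spectrum is pushed up to $[\Lmb_e,+\infty)$ by the quadratic-form estimate. The argument is correct and matches the paper's proof in all essentials.
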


\begin{proof}
We represent $\g f$ as
\begin{equation*}
f(x,y)=\sum\limits_{j} \psi_j(x)f_j(y)+f^\bot(x,y),\quad f_j(y):=\int\limits_{\mathds{R}} f(x,y)\psi_j(x)dx.
\end{equation*}
It is clear that $f^\bot\in V\cap L_2(\mathds{R}^2,\chi dxdy)$, $f_j\in L_2(\mathds{R},e^{a|y|}dy)$, where weighted space $L_2(\mathds{R},e^{a|y|}dy)$ is introduced similar to $L_2(\mathds{R}^2,\chi dxdy)$.

Hence, we can solve problem (\ref{3.4}) independently for
for the right hand sides $f=\psi_j f_j$ and $f=f^\bot$. A solution $u_j$  to the latter can be found explicitly
\begin{equation*}
u_j(x,y,z)=-\frac{\psi_j(x)}{2 k_j(z)} \int\limits_{\mathds{R}} \E^{-k_j(z)|y-t_2|}f_j(t_2)dt_2.
\end{equation*}
Functions $u_j$ are well-defined since $f_j\in L_2(\mathds{R},e^{a|y|}dy)$.

We seek the solution to problem (\ref{3.4}) with $f=f^\bot$ in the space $V\cap W_2^2(\mathds{R}^2)$. It is easy to check that space $V$ is an invariant one for operator $\Op$. Given arbitrary $\psi\in V\cap W_2^2(\mathds{R}^2)$, we employ  the definition of $\Lmb_0$ and  of space $V$ and proceed as in (\ref{3.9}):
\begin{equation*}
(\Op\psi,\psi)_{L_2(\mathds{R}^2)}=\|\nabla \psi\|_{L_2(\mathds{R}^2)}^2 +  (f'(\phi)\psi,\psi)_{L_2)(\mathds{R}^2)} \geqslant \Lmb_e\|\psi\|_{L_2(\mathds{R}^2)}^2.
\end{equation*}
By the minimax principle it follows that the spectrum for the restriction of $\Op$ on $V$ is $[\Lmb_e,+\infty)$. Thus, the operator $\Res(z):=(\Op-z)^{-1} : V\mapsto V\cap W_2^2(\mathds{R}^2)$ is well-defined and bounded. By the standard properties of the resolvent for a self-adjoint operator we conclude that $\Res$ is holomorphic w.r.t. $z$ once $\dist\big(z,[\Lmb_0,+\infty]\big)\geqslant c_0$.
\end{proof}

Let us discuss the latter lemma. For each $f\in L_2(\mathds{R}^2,\chi dxdy)$, the function $u=(\Op-z)^{-1}f$ solves equation (\ref{3.4}). At the same time, formula (\ref{3.5}) for solution to equation (\ref{3.4}) gives the function which belongs to $L_2(\mathds{R}^2)$ only as $\RE k_j(z)>0$, otherwise this function increases exponentially at infinity. In view of this fact, formula (\ref{3.5}) provides an analytic continuation of the resolvent $(\Op-z)^{-1}$ w.r.t. $z$.

As formula (\ref{3.5})  states, the solution to equation (\ref{3.4}) has singularities w.r.t. $z$ at points $\Lmb_j$.  Moreover, let $z=\Lmb_j-k^2$, where $k$ is small enough. Then by Lemma~\ref{lm3.2} the solution to problem (\ref{3.4}) can be represented as
\begin{equation}\label{3.10}
u(x,y,k)=-\frac{\psi_j(x)}{2k}\int\limits_{\mathds{R}^2} \psi_j(t_1)\E^{-k|y-t_2|}f(t_1,t_2)dt_1 dt_2 + \Res_j(k)f,
\end{equation}
where $\Res_j(k): L_2(\mathds{R}^2)\mapsto W_2^2(\mathds{R}^2)$ is a bounded operator holomorphic w.r.t. sufficiently small $k$. Since $f$ is compactly supported, we can expand the exponential in (\ref{3.10}) w.r.t. $k$ and rewrite then (\ref{3.10}) as
\begin{equation}\label{3.11}
u(x,y,k)=-\frac{\ell_j(f)}{2k}\psi_j(x)+\widetilde{\Res}_j(k)f,
\quad \ell_j(f):=\int\limits_{\mathds{R}^2} \psi_j(t_1)f(t_1,t_2)dt_1d t_2,
\end{equation}
where
\begin{align*}
(\widetilde{\Res}_j(k)f)(x,y)= (\Res_j(k)f)(x,y)
- \psi_j(x) \int\limits_{\mathds{R}^2} (e^{-k|y-t_2}-1) \psi_j(t_1) f(t_1,t_2)dt_1dt_2.
\end{align*}
In view of the above formula and the properties of $\Res_j(k)$ we see that $\widetilde{\Res}_j(k)$ is a bounded operator from $L_2(\mathds{R}^2,\chi dxdy)$ into $W_2^2(\mathds{R}^2)$ holomorphic w.r.t. small $k$.

We proceed to equation (\ref{2.5}). To study it, we apply a modified version of non-self-adjoint Birman-Schwinger principle suggested  in \cite{Ga}, see also \cite{MSb06}, with certain modifications. The modifications are needed since here we deal with operator pencil instead of classic eigenvalue equation.

Assume that
\begin{equation}\label{3.15}
\dist\big(-\lmb^2, \spec(\Op_0)\big)>c_1,
\end{equation}
where $c_1$ is a fixed positive constant,
and $\lmb$ ranges in a compact set. Then we can apply the resolvent $(\Op+\lmb^2)^{-1}$ to (\ref{2.5}) and get
\begin{equation*}
\big(\I+\e\lmb(\Op+\lmb^2)^{-1}\g\big)\psi=0.
\end{equation*}
In accordance with Lemma~\ref{lm3.2}, for the considered values of $\lmb$, the operator $(\Op+\lmb^2)^{-1}$ is bounded uniformly in $\lmb$. Hence, we can invert the operator in the left hand side of the latter equation and it leads us to $\psi=0$. Thus, provided the first inequality in (\ref{3.15}) is obeyed, and $\lmb^2$ ranges in a compact set, equation (\ref{2.5}) can have non-trivial solutions only as $-\lmb^2$ is close to $\Lmb_j$. This is why in what follows we consider only small neighborhoods of $\Lmb_j$.

Assume that
\begin{equation}\label{3.16}
\lmb^2(k)=k^2-\Lmb_*,\quad  \lmb(k)=\pm\iu\sqrt{\vk_*^2-k^2}, \quad \vk_*:=\sqrt{\Lmb_*},
\end{equation}
where $k$ is complex and sufficiently small. In what follows we regard $k$ as a new spectral parameter.

We rewrite equation (\ref{2.5}) as
\begin{equation}\label{3.23}
(-\D+f'(\phi)+\lmb^2)\psi=-\e\lmb\g\psi.
\end{equation}
Then we can invert the operator in the left hand side and by (\ref{3.10}) we obtain
\begin{equation}
\psi(x,y)= -\frac{\ell_*(f_\e)}{2k}\psi_*(x) + (\widetilde{\Res}_*(k)(f_\e))(x,y),\label{3.18}
\end{equation}
where $\ell_j$ and $\widetilde{\Res}_*(k)$   denote $\ell_j$ and  $\widetilde{\Res}_j(k)$ associated with eigenvalue $\Lmb_*$, and
\begin{equation*}
f_\e:=-\e\lmb(k) \g\psi\in L_2(\mathds{R}^2,\chi dxdy).
\end{equation*}
We substitute this identity into (\ref{3.23}) and arrive at the equation for $f_\e$:
\begin{equation*}
f_\e=\frac{\e\lmb(k)}{2k}\ell_*(f_\e)\g\psi_*-\e\lmb\g \widetilde{\Res}_*(k)f_\e.
\end{equation*}
In view of the above described properties of operator $\widetilde{\Res}_*$, operator $\g \widetilde{\Res}_*$ is a bounded one in $L_2(\mathds{R}^2,\chi dxdy)$ holomorphic w.r.t. small $k$. Hence, the operator
$\mathcal{L}_*(\e,k):=\big(\I+\e\lmb(k)\g\widetilde{\Res}_*(k)\big)^{-1}$ is well-defined being a bounded operator in $L_2(\mathds{R}^2,\chi dxdy)$ holomorphic w.r.t. $k$. The first term of the Neumann series for $\mathcal{L}_*$ reads as
\begin{equation}\label{3.14}
\mathcal{L}_*(\e,k)=\I- \e\lmb(k)\g\widetilde{\Res}_*(k)+\Odr(\e^2)= \I-\e\lmb(0)\g\widetilde{\Res}_*(0)+\Odr(\e^2+\e |k|).
\end{equation}

We move the second term in the right hand side of (\ref{3.18}) and then we apply   operator $\mathcal{L}_*(\e,k)$ to the equation. It implies
\begin{equation}\label{3.19}
f_\e=\frac{\e\lmb(k)}{2k}\ell_*(f_\e)\mathcal{L}_*(\e,k)(\g\psi_*).
\end{equation}
We observe that $\g\psi_*\in L_2(\mathds{R}^2,\chi dxdy)$, so, the right hand side is well-defined. Moreover, $\ell_*(\g\psi)$ can not be zero, since otherwise the latter equation implies $\psi=0$, while we are interesting in non-trivial solutions to (\ref{2.5}). Taking into consideration this fact, we apply functional $\ell_*$ to (\ref{3.19}) and cancel out the term $\ell_*(\g\psi)$. It leads us to the final equation:
\begin{equation}\label{3.6}
2k=\e F(\e,k),\quad F(\e,k):=\e\lmb(k)\ell_*\big(\g\mathcal{L}_*(\e,k)\g\psi_*\big).
\end{equation}
This is the equation for the values of $k$ for which equation (\ref{2.5}) has a nontrivial solution with $\lmb$ defined by (\ref{3.16}). We note that, in fact, we deal with two independent equations in (\ref{3.6}) related to two different branches of square root in (\ref{3.16}). As one can easily make sure by using (\ref{3.19}), the associated non-trivial solution to (\ref{3.19}) reads as
\begin{equation*}
f_\e=\mathcal{L}_*(\e,k)(\g\psi_*),
\end{equation*}
where $k$ is a root to (\ref{3.6}). We observe that thanks to the above described properties of operator $\mathcal{L}_*$, function $F$ is holomorphic w.r.t. $k$.

Let us prove that for sufficiently small $\e$ equation (\ref{3.6}) has the unique solution $k=k_\e$ converging to zero as $\e\to0$. We take a small fixed $\d>0$ so that function $F$ is holomorphic in $B_\d:=\{k: |k|<\d\}$. Then it is bounded uniformly in $\e$ and $k$ and thus $|\e F|\leqslant |k|$ as $|k|=\d$ once $\e$ is small enough. Hence, by Rouch\'e theorem the function $k\mapsto 2k-\e F(\e,k)$ has the same amount zeroes (counting orders) in $B_\d$ as the function $k\mapsto 2k$ does. Thus, equation (\ref{3.6}) has the unique zero $k=k_\e$ in $B_\d$. This root converges to zero as $\e\to0$ that is implied immediately by equation (\ref{3.6}) since its right hand side tends to zero as $\e\to0$.

Consider the case $\Lmb_*=0$. Then $\lmb(k)=\pm k$ and in this case equation (\ref{3.6}) has the only solution $k_\e=0$.
It means that the perturbation $\e\lmb\g$ does not change this spectral point of the unperturbed system. It completes the proof of Statement 1 in Theorem~\ref{th2.2}.

Consider the case $\Lmb_*\not=0$ and let us calculate the asymptotics for $k_\e$ as $\e\to0$.  We substitute (\ref{3.14}) into the definition of $F$:
\begin{equation*}
F(\e,k)=\e\lmb(0)\ell_*(\g\psi_*)-\e^2\lmb^2(0) \ell_*\big(\g \widetilde{\Res}_*(0) \psi_*\big) + \Odr(\e|k|^2+\e^2|k|+\e^3).
\end{equation*}
This formula and equation (\ref{3.6}) implies the asymptotics for $k_\e$:
\begin{align*}
&k_\e=\pm\iu\e\vk_* K_1 + \e^2\vk_*^2 K_2+\Odr(\e^3),
\\
&K_1:= \frac{1}{2}\ell_*( \psi_*), \quad K_2:= \frac{1}{2} \ell_*( \g  \widetilde{\Res}_*(0)\g\psi_*).
\end{align*}

We denote $U_*:=\widetilde{\Res}_*(0)(\g\psi_*)$. Let us check that it satisfies (\ref{2.16}), (\ref{2.17}). The second term in (\ref{3.10}) with $f=\g\psi_*$ belongs to $W_2^2(\mathds{R}^2)$  for all $k$ including $k=0$, while the first term  can be expanded as
\begin{equation}\label{3.2}
\begin{aligned}
\frac{\psi_*(x)}{2k} &\int\limits_{\mathds{R}^2} \psi_*(t_1)\E^{-k|y-t_2|}\g(t_1,t_2)\psi_*(t_1)dt_1 d t_2
\\
&=-\frac{\ell_*(f)}{2k}\psi_*
+ \frac{\psi_*(x)}{2}  \int\limits_{\mathds{R}^2}|y-t_2|  \psi_*(t_1)\g(t_1,t_2) dt_1 dt_2 + O(k),\quad k\to0,
\end{aligned}
\end{equation}
and
\begin{equation*}
(-\D+f'(\phi)-\Lmb_*^2)\psi_*(x)\int\limits_{\mathds{R}^2} |y-t_2|\psi_*(t_1)\g(t_1,t_2) dt_1 dt_2=\ell_*(\g\psi_*)\psi_*(x).
\end{equation*}
We substitute this representation into (\ref{3.10}) and the result is substituted then into the corresponding equation (\ref{3.4}). Then we pass to the limit as $k\to0$ and arrive at equation (\ref{2.16}). Asymptotics (\ref{2.17}) is implied immediately by the definition of $\widetilde{\Res}_*$ and (\ref{3.2}). Thus, together with  (\ref{3.11}), it yields that the formulae for $K_j$ can be rewritten as (\ref{3.24}).

If  $\RE k_\e>0$,  it follows from (\ref{3.5}) the associated solution to problem (\ref{3.23}) decays exponentially at infinity. Thus, it is an eigenfunction and $\lmb(k_\e)$ is the corresponding eigenvalue. Once $\RE k_\e<0$, we deal with a resonance. To check the sign of $\RE k_\e$, it is sufficient to find the sign of $\RE(\pm\iu\e\vk_* K_1 + \e^2\vk_*^2 K_2)$. And Statements~2-6 in Theorem~\ref{th2.2} appear just as various particular cases of calculating the sign of $\RE(\pm\iu\e\vk_* K_1 + \e^2\vk_*^2 K_2)$ for various combinations of signs  of $\Lmb_*$, $K_1$, and $K_2$.
The proof of Theorem~\ref{th2.2} is complete.



\section{Conclusion} \label{Sec4}
The eigenvalue problem for the dynamical perturbation around the static kink solution to the 2D Klein-Gordon continuum equation perturbed by a spatially localized, $\mathcal{PT}$-symmetric term has been analysed. The main result is presented in Theorem 1 in Sec. \ref{Sec2}, which describes the behavior of the eigenvalues corresponding to the extended and spatially localized eigenmodes as the functions of small perturbation parameter.

\section*{Acknowledgments}
The is supported by the Russian Foundation for Basic Research, grant no. 15-31-20037.

%
%



\begin{thebibliography}{00}

\bibitem{J2025}
\textsc{P.~J.~Johnnon}, \emph{Numerical Methods for Partial
Differential Equations} (SIAM, Philadelphia, 2025).

\bibitem{Bender1}
\textsc{C.~M.~Bender} and \textsc{S.~Boettcher}, Real spectra in nonhermitian
hamiltonians having PT symmetry, \emph{Phys. Rev. Lett.} 80: 5243--5246 (1998).

\bibitem{Bender2}
\textsc{C.~M.~Bender}, \textsc{D.~C.~Brody}, and \textsc{H.~F.~Jones}, Complex extension of quantum mechanics, Phys. Rev. Lett. 89, 270401–4 (2002).

\bibitem{Bender3}
\textsc{C.~M.~Bender}, Making Sense of Non-Hermitian Hamiltonians,
Rep. Prog. Phys. 70, 947-1018 (2007).

\bibitem{QM1}
\textsc{K. Jones-Smith} and \textsc{H.~Mathur}, Non-Hermitian quantum Hamiltonians with PT symmetry, Phys. Rev. A 82, 042101 (2010).

\bibitem{QM2}
\textsc{C.~M.~Bender} and \textsc{P.~D.~Mannheim}, PT symmetry in relativistic quantum mechanics, Phys. Rev. D 84, 105038 (2011).

\bibitem{Optics1}
\textsc{R.~El~Ganainy}, \textsc{K.~G.~Makris}, \textsc{D.~N.~Christodoulides} and \textsc{Z.~H.~Musslimani}, Theory of coupled optical PTsymmetric structures, Opt. Lett. 32, 2632–2634 (2007).

\bibitem{Optics2}
\textsc{C.~E.~Ruter}, \textsc{K.~G.~Makris}, \textsc{R.~El~Ganainy}, \textsc{D.~N.~Christodoulides}, \textsc{M.~Segev} and \textsc{D.~Kip}, Observation of parity-time symmetry in optics, Nat. Phys. 6, 192–195 (2010).

\bibitem{Optics3}
\textsc{S.~V.~Suchkov}, \textsc{A.~A.~Sukhorukov}, \textsc{J.~Huang}, \textsc{S.~V.~Dmitriev}, \textsc{C.~Lee} and \textsc{Yu.~S.~Kivshar}, Nonlinear switching and solitons in PT-symmetric photonic systems, arXiv:1509.03378v1

\bibitem{Electricity1}
\textsc{J.~Schindler}, \textsc{A.~Li}, \textsc{M.~C.~Zheng}, \textsc{F.~M.~Ellis} and \textsc{T.~Kottos}, Phys. Rev. A 84, 040101 (2011).

\bibitem{Electricity2}
\textsc{H.~Ramezani}, \textsc{J.~Schindler}, \textsc{F.~M.~Ellis}, \textsc{U.~Gunther} and \textsc{T.~Kottos}, Bypassing the bandwidth theorem with PT symmetry, Phys. Rev. A 85, 062122 (2012).

\bibitem{Mechanics}
\textsc{C.~M.~Bender}, \textsc{B.~K.~Berntson}, \textsc{D.~Parker} and \textsc{E.~Samuel}, Observation of PT phase transition in a simple mechanical system, Am. J. Phys. 81, 173 (2013).

\bibitem{Mechanics1}
\textsc{J.~Cuevas}, \textsc{P.~G.~Kevrekidis}, \textsc{A.~Saxena} and \textsc{A.~Khare}, PT-symmetric dimer of coupled nonlinear oscillators, Phys. Rev. A 88, 032108 (2013).

\bibitem{KG1}
\textsc{A.~Demirkaya}, \textsc{D.~J.~Frantzeskakis}, \textsc{P.~G.~Kevrekidis}, \textsc{A.~Saxena} and \textsc{A.~Stefanov}, Effects of parity-time symmetry in nonlinear Klein-Gordon models and their stationary kinks, Phys. Rev. E 88, 023203 (2013).

\bibitem{DKKSS}
\textsc{A.~Demirkaya}, \textsc{T.~Kapitula}, \textsc{P.~G.~Kevrekidis}, \textsc{M.~Stanislavova} and \textsc{A.~Stefanov}, Stud. Appl. Math. 133, 298–317 (2014).

\bibitem{Lu}
\textsc{N.~Lu}, \textsc{P.~G.~Kevrekidis} and \textsc{J.~Cuevas-Maraver}, PT-symmetric sine-Gordon breathers, J. Phys. A: Math. Theor. 47, 455101 (2014).

\bibitem{DemirkayaPLA}
\textsc{A.~Demirkaya} and \textsc{S.~Hakkaev}, On the spectral stability of periodic waves of the coupled Schr\"odinger equations, Phys. Lett. A 379, 2908-2914 (2015).

\bibitem{K1}
\textsc{P.~G.~Kevrekidis}, Variational method for nonconservative field theories: Formulation and two PT-symmetric case examples, Phys. Rev. A 89, 010102(R) (2014).

\bibitem{Danial1}
\textsc{D.~Saadatmand}, \textsc{S.~V.~Dmitriev}, \textsc{D.~I.~Borisov} and \textsc{P.~G.~Kevrekidis}, Interaction of sine-Gordon kinks and breathers with a parity-time-symmetric defect, Phys. Rev. E 90, 052902 (2014).

\bibitem{Danial2}
\textsc{D.~Saadatmand}, \textsc{S.~V.~Dmitriev}, \textsc{D.~I.~Borisov}, \textsc{P.~G.~Kevrekidis}, \textsc{M.~A.~Fatykhov} and \textsc{K.~Javidan}, Effect of the phi-4 kink's internal mode at scattering on a PT-symmetric defect, JETP Lett. 101, 550-555 (2015).

\bibitem{Danial3}
\textsc{D.~Saadatmand}, \textsc{S.~V.~Dmitriev}, \textsc{D.~I.~Borisovd}, \textsc{P.~G.~Kevrekidis}, \textsc{M.~A.~Fatykhov} and \textsc{K.~Javidan}, Kink scattering from a parity-time-symmetric defect in the phi-4 model, Commun. Nonlinear. Sci. Numer. Simulat. 29, 267–282 (2015).

\bibitem{Bi} \textsc{M.~S.~Birman}, Perturbation of the continuous spectrum of a singular elliptic operator under a change of the boundary and the boundary condition, Vestnik Leningradskogo universiteta. 1, 22-55 (1962).

\bibitem{Ga} \textsc{R.~R.~Gadylshin}, On local perturbations of the Schrodinger operator on the axis, Theor. Math. Phys. 132(1), 976-982 (2002).

\bibitem{MSb06} \textsc{D.~Borisov}, Discrete spectrum of an asymmetric pair of waveguides coupled through a window, Mat. Sb. 197(4), 3-32 (2006). [English translation: Sbornik: Mathematics 197(4):475-504, 2006.]

\end{thebibliography}
\end{document}